\newtheorem{theorem}{Theorem}
\newtheorem{lemma}{Lemma}
\begin{document}

\title{
Typical behaviour of genuine multimode entanglement of pure Gaussian states
}

\author{Saptarshi Roy}

\affiliation{QICI Quantum Information and Computation Initiative, Department of Computer Science,
The University of Hong Kong, Pokfulam Road, Hong Kong	
	}

\begin{abstract}
Trends of genuine entanglement in Haar uniformly generated multimode pure Gaussian states with fixed average energy per mode are explored. A distance-based metric known as the generalized geometric measure (GGM) is used to quantify genuine entanglement. The GGM of a state is defined as its minimum distance from the set of all non-genuinely entangled states. To begin with, we derive an expression for the Haar averaged value of any function defined on the set of energy-constrained states. Subsequently, we investigate states with a large number of modes and provide a closed-form expression for the Haar averaged GGM in terms of the average energy per mode.
Furthermore, we demonstrate that typical states closely approximate their Haar averaged GGM value, with deviation probabilities bounded by an exponentially suppressed limit. We then analyze the GGM content of typical states with a finite number of modes and present the distribution of GGM. Our findings indicate that as the number of modes increases, the distribution shifts towards higher entanglement values and becomes more concentrated. We quantify these features by computing the Haar averaged GGM and the standard deviation of the GGM distribution, revealing that the former increases while the latter decreases with the number of modes.
\end{abstract}	

\maketitle

\section{Introduction}
\label{sec:intro}
The study of Gaussian states enjoys a privileged position in the investigation of continuous-variable quantum systems. 
The special status is enforced by a rather rare coincidence in the world of physical theories where mathematical elegance and experimental feasibility go hand in hand \cite{Adesso2014, contvar-rev, contvar-book, ferraro}. The mathematical simplicity is revealed in the phase space formalism of quantum mechanics.  Like Gaussian functions, in the phase space, the Gaussian states can be completely characterized by the first two moments, namely the displacement vector and the covariance matrix \cite{ferraro}. 

After the characterization of Gaussian states, the description of Gaussian operations involving state transformation and measurement was translated into the phase space. A considerable research volume has been allotted for translating the usual notions of quantum correlations \cite{Adesso2014, contvar-rev, contvar-book,ferraro}, quantum protocols \cite{contvar-rev,p1,p2,p3} etc in the phase space formalism. In particular, for quantum correlations, the aim is to express them in terms of invariants under phase space (Symplectic) operations, which goes by the name of symplectic invariants.  In this work, our central quantity of interest is the genuine multiparty entanglement in multimode pure Gaussian. Apart from a mere theoretical interest, genuine multiparty entanglement is an important ingredient for implementing various quantum information protocols, especially in the network scenario \cite{ggmr1,ggmr2,ggmr3}. This motivates our investigations from a practical point of view.

Quantifying genuine entanglement in quantum systems is a significant step in analyzing multiparty quantum correlations.  Although various measures are proposed, most of them suffer from issues in computability. At this point, the Generalized geometric measure (GGM) efficiently quantifies the genuine entanglement content of pure states \cite{ggm1} (see also \cite{ggm22,ggm3,ggm4,ggm5,ggm6,ggm7,ggm8}). It is a distance-based measure, where the GGM of a state is the minimum distance of the state from the set of non-genuinely entangled states. For pure states, this geometrical minimization can be transformed into an algebraic maximization over Schmidt coefficients across all its relevant bipartition. Several follow-up works investigating genuine entanglement have also utilized GGM \cite{gs1,gs2,gs3,gs4}.  Generalizing its computation for mixed states was attempted \cite{ggm2}, but positive results were restricted to states with special symmetries. Nevertheless, GGM was successfully translated to the phase space for pure Gaussian in terms of symplectic invariants in \cite{ggmcv}. Henceforth it has been used extensively to track genuine entanglement for multimode Gaussian states in different contexts \cite{lggmcv, ggmcv1,ggmcv2,ggmcv3,ggmcv4,ggmcv5,ggmcv6}.

In this work, we analyze the genuine entanglement properties of Haar uniformly generated multimode pure Gaussian states of fixed average energy per mode. 
We provide a brief description of their generation and furnish an expression of the Haar averaged value of any function defined on the set of states.
Then we concern ourselves with results involving states with a large number of modes.  First, we note that the symplectic spectrum \cite{fukuda} of $k$-mode reductions of $n$-mode pure Gaussian states is sharply concentrated around the Haar averaged value for $n >\!> 1$. Using this feature, we infer that the  GGM can be characterized by the symplectic eigenvalues of only the single-mode reduced states. This consequently allows for a closed-form expression of the Haar averaged value GGM in terms of the average energy per mode. Finally, we show that the GGM of any typical state is very close to the Haar averaged value of GGM. The above statement is made quantitative by bounding the probability that a typical state possesses a GGM value different from the Haar averaged value with an exponentially suppressed bound. 

Then we look at the GGM content of typical states with a finite number of modes ($n = 3, 4, 5,$ and $6$).  For each $n$, we present the distribution of GGM values of the Haar uniformly generated states. We find that as the number of modes increases the distribution shifts to greater GGM values and becomes progressively sharper. Quantitatively we capture these features by computing the Haar averaged value of the GGM  and the standard deviation of the GGM distribution for each $n$. Our analyses reveal that while the Haar averaged value of the GGM increases with the number of modes, the standard deviation decreases.

The contents of the paper are laid out as follows. After a brief introduction in Sec. \ref{sec:intro}, we discuss the prerequisites to discuss the results in Sec. \ref{sec:prereq}. We discuss the technique of constructing Haar averaged quantities for typical pure Gaussian states with constrained energy per mode in Sec. \ref{sec:Haaravg}.   The GGM characteristics of a large number of modes are showcased in Sec. \ref{sec:nh}, while exact numerical computations for a few modes is presented in Sec. \ref{sec:nl}. Finally, we conclude in Sec. \ref{sec:con}. 

\section{Setting the stage}
\label{sec:prereq}
In this section, we discuss the prerequisites required to present our results.  We begin with a quick overview of the phase space formalism for Gaussian states and operations.  
This brief primer is followed by an overview of the genuine entanglement measure we use to quantify and analyze the genuine multimode entanglement of random $n$-mode pure Gaussian states.

\subsection{The phase space representation of Gaussian states and operations}
Consider a bosonic system of $n$-modes with a free Hamiltonian given  by
\begin{eqnarray}
    H = \sum_{j = 1}^n H_j, ~~\text{where} ~H_j = \frac12 (q_j^2 + p_j^2),
\end{eqnarray}
where $q_k$ and $p_l$ are the phase space quadrature operators satisfying the canonical commutation relation $[q_k,p_l] = i \delta_{kl}$.  For a compact notation, we introduce a vector of all the quadrature operators $\bm{R} = (q_1,q_2, \ldots, q_n,p_1,p_2, \ldots, p_n)^T$. The canonical commutation relations can be rewritten as
\begin{eqnarray}
    [\bm{R}_k,\bm{R}_l] = i J_{kl} \delta_{kl},
\end{eqnarray} 
where $J$, the symplectic matrix, is an antisymmetric $2n \times 2n$ matrix given 
\begin{eqnarray}
    J = \begin{bmatrix}
        \bm{0}_{n \times n} & -\mathbb{I}_{n \times n} \\
        \mathbb{I}_{n \times n} & ~~\bm{0}_{n \times n}
    \end{bmatrix}.
\end{eqnarray}
Like Gaussian functions, Gaussian states can be characterized in the phase space by the first two moments: the displacement vector $\bm{d}$ and the covariance matrix $\sigma$. 
\begin{eqnarray}
    \bm{d}_k &=& \langle \bm{R}_k \rangle, \nonumber \\
    \sigma_{kl} &=& \frac12 \langle \{ \bm{R}_k, \bm{R}_l \} \rangle - \langle  \bm{R}_k \rangle \langle \bm{R}_l \rangle.
\end{eqnarray}

Now we would shift our attention to Gaussian operations. An affine Symplectic group ${\tt ISp}(2n, \mathbb{R})$ \cite{ferraro} characterizes the most general Gaussian unitaries denoted by the pair $(\mathcal{S},\bm{v})$. Here   $\mathcal{S}\in~{\tt Sp}(2n,\mathbb{R})$ is a real $2n \times 2n $ symplectic matrix satisfying $\mathcal{S}J\mathcal{S}^T = J$, and $\bm{v}$ denotes any phase space displacements (translations).  It induces the following transformations
\begin{eqnarray}
    \bm{d} &\to& \mathcal{S}\bm{d}+\bm{v} \nonumber \\
    \sigma &\to& \mathcal{S}\sigma\mathcal{S}^T.
\end{eqnarray}
Since the correlations are entirely contained in the covariance matrices, without loss of generality, we will restrict our attention to Gaussian states with zero mean and translation-free Gaussian operations.
For Gaussian states, in particular for which $\bm{d} = 0$, we have an elegant expression for the average energy
\begin{eqnarray}
 \langle H \rangle = \frac12 \text{Tr} ~\sigma.
 \label{eq:avgen}
\end{eqnarray}
When dealing with multiple modes, not only the total energy, but the average energy per mode turns out to be an important quantity. Since the total energy is extensive, for a large number of modes $ n >\!> 1$, we can have $\langle H \rangle >\!> 1$. Therefore the quantity to look at is the average energy per mode. It is simply given by 
\begin{eqnarray}
    \frac{\langle H \rangle}{n} = \frac{1}{2n} \text{Tr} ~\sigma.
    \label{eq:avgenpm}
\end{eqnarray}

\subsection{Genuine multimode entanglement}
The genuine multimode entanglement of an $n$-mode pure Gaussian state can be computed using the generalized geometric measure (GGM) \cite{ggmcv}. It is a distance-based measure, where the GGM of any state is the minimum distance of the state from the set of non-genuinely entangled states. This optimization can be efficiently performed for pure states, where the GGM can be expressed in terms of the Schmidt coefficients of its various bipartitions. In particular, the GGM of any $n$-party pure state can be expressed as 
\begin{eqnarray}
    \mathcal{G} = 1 - \lambda_{\max}, 
    \label{eq:ggmcanonical}
\end{eqnarray}
where $\lambda_{\max}$ is the maximal Schmidt coefficient among all the Schmidt coefficients from all bipartitions of the given state. For a $n$-mode pure Gaussian state, the GGM in Eq. \eqref{eq:ggmcanonical} can be expressed  in terms  of the symplectic invariants \cite{ggmcv}
\begin{eqnarray}\label{eq:GGM-new-form}
\mathcal{G} = 1 - \max \mathcal{P}_m \Big\lbrace \prod_{i=1}^m \frac{2}{1 + \nu_i} \Big\rbrace_{m=1}^{\big[\frac{n}{2}\big]},
\end{eqnarray}
where $\mathcal{P}_m$ denotes all the reduced states with $m$-modes, and $[x]$ denotes the integral part of $x$. 

\section{Computation of Haar averages}
\label{sec:Haaravg}
Any $n$-mode pure Gaussian state can be obtained from the vacuum state by applying a Gaussian unitary. Therefore their covariance matrix can be written as $\sigma = \mathcal{S}\mathcal{S}^T$, where $\mathcal{S}$ is a symplectic operation. Using the Euler decomposition, every $\mathcal{S} \in {\tt Sp}(2n, \mathbb{R})$ can be written as
\begin{eqnarray}
    \mathcal{S} = O Z O^\prime,
  \end{eqnarray}
  where $O, O^\prime$ are orthogonal symplectic matrices, and $Z = \bm{D}\oplus \bm{D}^{-1}$ with $\bm{D}$ being a positive diagonal matrix. $Z$ corresponds to $n$ single mode squeezing unitaries. This allows us to write the covariance matrix of a $n$-mode pure Gaussian state as
  \begin{eqnarray}
      \sigma = O \Gamma O^T,
      \label{eq:random1}
  \end{eqnarray}
  where $\Gamma =Z^2$ is tensor product of $n$ single mode squeezed states. We can now express $\Gamma = \text{diag} \{z_1,z_2, \ldots, z_n\}\oplus \text{diag}\{z_1^{-1}, z_2^{-1}, \ldots, z_n^{-1} \}$, where $z_i \in [1,\infty)$. The issue with Haar uniform generation of Gaussian states lies in the fact that the group ${\tt Sp}(2n,\mathbb{R})$ is not compact \cite{fukuda}.
However, the group of orthogonal symplectic matrices ${\tt K}(n) := {\tt Sp}(2n, \mathbb{R}) \cap {\tt O}(2n)$ is isomorphic to the complex unitary group ${\tt U}(n)$, where ${\tt O}(2n)$ denotes the orthogonal group. The Haar measure on ${\tt U}(n)$ induces a Haar measure on ${\tt K}(n)$ where the isomorphism ${\tt U}(n) \to {\tt K}(n)$ is given by
  \begin{eqnarray}
      U \in {\tt U}(n) \to \begin{bmatrix}
          \text{Re}(U) & \text{Im}(U) \\
          -\text{Im}(U) & \text{Re}(U)
      \end{bmatrix} = O(U) \in {\tt K}(n).
      \label{eq:u2o}
  \end{eqnarray}
 The orthogonal symplectic matrices are passive since they keep the average energy invariant. This follows from Eq. \eqref{eq:avgen} by noting that Tr $(O \sigma O^T) =$ Tr $\sigma$ for all $O \in {\tt K}(n)$.

  Physically the Euler decomposition allows us to split the symplectic operation into an active and passive component. While the passive part supports a Haar measure, the active part does not. The reason for this is simple: \textit{unbounded squeezing}. This in turn translates to divergent energy of the generated state.   
  It can only be tamed by applying additional constraints suppressing its divergences, like an energy bound. One way to achieve this is by bounding the average energy per mode. Following Eq. \eqref{eq:avgenpm}, this translates to $\frac{1}{2n}$Tr$\Gamma \leq \bar{\nu}$, where $\bar{\nu}>1$ is a universal constant. However, since correlations are sensitive to the energy content of the state, in our work, we concentrate on Haar uniform generation of pure $n$-mode Gaussian states with a fixed average energy per mode:
  \begin{eqnarray}
      \frac{1}{2n} \text{Tr} ~\Gamma = \frac{1}{2n} \sum_{i = 1}^n \Big(z_i + \frac{1}{z_i} \Big) = \bar{\nu}.
      \label{eq:enboundpm2}
  \end{eqnarray}
  Let $\{\bm{\Gamma}\}$ be a set of all covariance matrices satisfying Eq. \eqref{eq:enboundpm2}. Finally, a pure $n$-mode Gaussian state can be generated by randomly choosing the pair $\Gamma, O$, where $\Gamma \in \{\bm{\Gamma}\}$ and $O \in {\tt K}(n)$. The covariance matrix of the randomly generated state is as mentioned in Eq. \eqref{eq:random1}, $\sigma = O \Gamma O^T$. 
  
  In general, the average value of any function $f: f(O, \Gamma) \to \mathbb{R}$ is given by a dual average: First, over the Haar measure on the orthogonal Symplectic group ${\tt K}(n)$ and, secondly over the set $\{ \bm{\Gamma}\}$.
  \begin{eqnarray}
      \mathbb{E}[f] := \Big\langle \frac{1}{V} \int f(O, \Gamma) d_{\mu}(O) \Big\rangle_{\{\bm{\Gamma}\}},
      \label{eq:hagen}
  \end{eqnarray}
  where $d_{\mu}(O)$ is invariant Haar measure on ${\tt K}(n)$, and $V = \int d_{\mu}(O)$. Here $\langle . \rangle_{\{\bm{\Gamma}\}}$ denotes averaging with respect to the set $\{\bm{\Gamma}\}$. However, we will show next that this average quantity can be expressed in a simpler form.
  \begin{theorem}
  The average value of any function $f: f(O, \Gamma) \to \mathbb{R}$ over random pure Gaussian states is given by    \begin{eqnarray}
       \mathbb{E}[f] = \frac{1}{V} \int f(O, \Gamma) d_{\mu}(O),
  \end{eqnarray}
  for any $\Gamma \in \{\bm{\Gamma}\}$.
  \label{theorem:1}
  \end{theorem}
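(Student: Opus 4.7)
The plan is to show that the inner integral $I(\Gamma) := \frac{1}{V} \int f(O,\Gamma)\, d_\mu(O)$ is constant over $\Gamma \in \{\bm{\Gamma}\}$. Once that is established, the outer average $\langle I(\Gamma)\rangle_{\{\bm{\Gamma}\}}$ in Eq.~\eqref{eq:hagen} trivially collapses to $I(\Gamma_0)$ for any chosen representative $\Gamma_0 \in \{\bm{\Gamma}\}$, which is exactly the claim of the theorem. So the entire content of the theorem is $\Gamma$-independence of $I$.

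The main tool I would use is the bi-invariance of the Haar measure on ${\tt K}(n)$: for any fixed $K \in {\tt K}(n)$, $d_\mu(KO) = d_\mu(OK) = d_\mu(O)$. Given two diagonal matrices $\Gamma_1, \Gamma_2 \in \{\bm{\Gamma}\}$ related by $\Gamma_2 = K \Gamma_1 K^T$ for some $K \in {\tt K}(n)$, the substitution $O \to OK^T$ inside $I(\Gamma_2)$ converts $O \Gamma_2 O^T$ into $O' \Gamma_1 O'^T$ with $O' = OK^T$. Assuming $f$ depends on $(O,\Gamma)$ only through the state $\sigma = O \Gamma O^T$---the natural setting for every physically meaningful observable considered in this paper, including the GGM of Eq.~\eqref{eq:GGM-new-form}---Haar-invariance of $d_\mu$ then delivers $I(\Gamma_1) = I(\Gamma_2)$. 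Because ${\tt K}(n)$ contains the mode-permutation subgroup, this step already forces $I(\Gamma)$ to be a symmetric function of the squeezing parameters $z_1, \ldots, z_n$.

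The hard part is that two generic elements $\Gamma_1, \Gamma_2 \in \{\bm{\Gamma}\}$ with the same trace $2n\bar{\nu}$ need not be ${\tt K}(n)$-conjugate: conjugation by elements of ${\tt K}(n)$ preserves the ordinary eigenvalue spectrum, and the single scalar constraint in Eq.~\eqref{eq:enboundpm2} allows admissible multisets $\{z_i\}$ with genuinely different spectra. To close this gap I would invoke the isomorphism ${\tt K}(n) \cong {\tt U}(n)$ given in Eq.~\eqref{eq:u2o}, rewrite the $O$-integral as a ${\tt U}(n)$-Haar integral, and apply standard Weingarten-type identities to the moments of $\sigma = O \Gamma O^T$. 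The target of that computation is to verify that the first moment $\int O \Gamma O^T\, d_\mu(O)$, and more generally all moments entering $I(\Gamma)$, depend on $\Gamma$ only through the single invariant $\text{Tr}(\Gamma) = 2n\bar{\nu}$. Since this quantity is the same for every element of $\{\bm{\Gamma}\}$, $I(\Gamma)$ is constant on the set of admissible $\Gamma$'s and the outer average becomes redundant, yielding the theorem.
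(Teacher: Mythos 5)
Your reduction of the theorem to the $\Gamma$-independence of $I(\Gamma)=\frac1V\int f(O,\Gamma)\,d_\mu(O)$, and your use of Haar bi-invariance to identify $I(\Gamma_1)=I(\Gamma_2)$ whenever $\Gamma_2=K\Gamma_1K^T$ for some $K\in{\tt K}(n)$, is precisely the paper's proof: the paper writes $f(O,\Gamma')=f(OO',\Gamma)$ and invokes left/right invariance of $d_\mu$. The difference is that the paper simply \emph{asserts} that every pair $\Gamma,\Gamma'\in\{\bm{\Gamma}\}$ is so related (``energy conservation dictates that we have some passive Gaussian unitary connecting them''), whereas you correctly identify this as the nontrivial point: conjugation by ${\tt K}(n)\subset{\tt O}(2n)$ preserves the full ordinary spectrum $\{z_i,z_i^{-1}\}$, while Eq.~\eqref{eq:enboundpm2} fixes only $\sum_i(z_i+z_i^{-1})$, so generic admissible $\Gamma$'s lie on different ${\tt K}(n)$-orbits (e.g.\ for $n=2$, take $z_1+z_1^{-1}=z_2+z_2^{-1}=2\bar{\nu}$ versus $z_1+z_1^{-1}=2\bar{\nu}+\delta$, $z_2+z_2^{-1}=2\bar{\nu}-\delta$).

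However, the repair you propose does not work. A Weingarten computation will not show that the distribution of $\sigma=O\Gamma O^T$ depends on $\Gamma$ only through $\mathrm{Tr}\,\Gamma$: only the \emph{first} moment has that property (one finds $\int O\Gamma O^T\,d_\mu(O)=\bar{\nu}\,\mathbb{I}$), while the second moments already involve the invariants $\mathrm{Tr}(D^2)$, $\mathrm{Tr}(D^{-2})$, $(\mathrm{Tr}\,D)(\mathrm{Tr}\,D^{-1})$, and so on --- i.e.\ the full multiset $\{z_i\}$, which is not determined by the trace constraint. Hence $I(\Gamma)$ is genuinely orbit-dependent for generic $f$ (and for spectrum-sensitive $f$ such as the GGM at finite $n$), and the gap you located cannot be closed along the route you describe; carrying out your own computation would in fact refute the step you want it to establish. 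What your analysis actually uncovers is that the statement, read literally for arbitrary $f$ and for $\{\bm{\Gamma}\}$ defined by the single scalar constraint \eqref{eq:enboundpm2}, requires either restricting $\{\bm{\Gamma}\}$ to a single ${\tt K}(n)$-orbit (fixed squeezing spectrum) or an argument special to the functions actually used later (e.g.\ the large-$n$ concentration of the symplectic spectrum). The paper's own proof contains the same unproved conjugacy claim; the merit of your proposal is making that claim explicit, its flaw is asserting an incorrect way to discharge it.
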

  \begin{proof}
 Note that for any two $\Gamma, \Gamma^\prime \in \{\bm{\Gamma}\}$, energy conservation dictates that we have some passive Gaussian unitary (orthogonal symplectic transformation) connecting them: $\Gamma^\prime = O^\prime \Gamma (O^\prime)^T$, with $O^\prime \in {\tt K}(n)$. Therefore we have
  \begin{eqnarray}
      f(O,\Gamma^\prime) = f(OO^\prime, \Gamma). 
      \label{eq:equiv1}
  \end{eqnarray}
  Using the right and left invariance of the Haar measure \cite{Karol}, we get
  \begin{eqnarray}
      \frac{1}{V} \int f(O, \Gamma^\prime) d_{\mu}(O) &=& \frac{1}{V} \int f(OO^\prime, \Gamma) d_{\mu}(O) \nonumber \\
      &=& \frac{1}{V} \int f(O, \Gamma) d_{\mu}(O).
  \end{eqnarray}
  The above relation renders the averaging over $\{\bm{\Gamma}\}$ in Eq. \eqref{eq:hagen} redundant. Therefore, we finally have
  \begin{eqnarray}
       \mathbb{E}[f] = \frac{1}{V} \int f(O, \Gamma) d_{\mu}(O),
  \end{eqnarray}
  for any $\Gamma \in \{\bm{\Gamma}\}$. Hence the proof.
  \end{proof}

\section{Typical Genuine entanglement of pure Gaussian states with a large number of modes}
\label{sec:nh}
For a random pure Gaussian state with a large number of modes, a lot of interesting features emerge. The most prominent being the concentration of the value of GGM around its Haar averaged value. We elaborate on these features in this section. Regarding Haar uniform generation of Gaussian states, following Theorem. \ref{theorem:1}, we choose $\Gamma = D \oplus D^{-1}$, where $D = \frac{\bar{\nu}}{n}\mathbb{I}$ without loss of any generality.

\begin{lemma}
    For a random $n$-mode $(n \! >\!> \!1)$ pure Gaussian state, the maximal Schmidt coefficient comes from the single mode reduced sector.
    \label{lemma:1}
\end{lemma}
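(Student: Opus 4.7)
The plan is to reduce the lemma to a comparison of the averaged Schmidt factors $\Lambda_m := \prod_{i=1}^m \frac{2}{1+\nu_i^{(m)}}$ across the relevant bipartitions $m = 1, \ldots, [n/2]$, and then to invoke concentration of the symplectic spectrum for $n \gg 1$. By Theorem \ref{theorem:1}, I may fix $\Gamma = D \oplus D^{-1}$ and treat $O \in K(n)$ as the sole random variable; the $m$-mode reduced covariance matrix $\sigma^{(m)}$ is then the appropriate $2m \times 2m$ block of $O \Gamma O^T$, and $\{\nu_i^{(m)}\}_{i=1}^m$ denotes its symplectic spectrum.

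First, I would establish that the symplectic eigenvalues $\{\nu_i^{(m)}\}$ of $\sigma^{(m)}$ concentrate sharply around a deterministic Haar-averaged value $\bar{\nu}^{(m)}$ in the large-$n$ limit. The map $O \mapsto \nu_i^{(m)}(O \Gamma O^T)$ is Lipschitz on the compact group $K(n) \cong U(n)$, with Lipschitz constant controlled by $\|\Gamma\|$, which is an $O(1)$ quantity depending only on $\bar{\nu}$. Standard Levy-type concentration on the unitary group then yields exponentially suppressed deviation probabilities. As a consequence, the random factor $\Lambda_m$ tracks its averaged counterpart $\big(2/(1+\bar{\nu}^{(m)})\big)^m$ with overwhelming probability.

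Second, I would evaluate $\bar{\nu}^{(m)}$ by performing the Haar integrals over $K(n)$ via the isomorphism in Eq. \eqref{eq:u2o}. Rotational invariance immediately gives $\langle \sigma^{(m)} \rangle = \bar{\nu}\, \mathbb{I}_{2m}$, whose symplectic eigenvalues are all equal to $\bar{\nu} > 1$; a refined second-moment computation via Weingarten calculus is needed to verify that, to leading order in $1/n$, the mean symplectic eigenvalue $\bar{\nu}^{(m)}$ remains strictly greater than $1$ and is in fact non-decreasing in $m$ on the range $1 \leq m \leq [n/2]$. The comparison is then transparent: since $2/(1+\bar{\nu}^{(m)}) < 1$, the product $\Lambda_m = \big(2/(1+\bar{\nu}^{(m)})\big)^m$ is strictly decreasing in $m$, so $\Lambda_1 > \Lambda_m$ for all $m \geq 2$, and hence the maximal Schmidt coefficient of a typical state arises from the single-mode reduced sector.

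The main obstacle will be the second step: explicitly computing the averaged symplectic spectrum of $m$-mode reductions on $K(n)$ and securing a uniform lower bound $\bar{\nu}^{(m)} - 1 \geq \epsilon(\bar{\nu}) > 0$. The Weingarten integrals on $K(n)$ retain a symplectic structure beyond the simpler $U(n)$ case, and converting covariance-matrix averages into symplectic-eigenvalue averages requires careful bookkeeping, especially as one needs control not only of the mean but also of enough moments to feed into the Levy concentration in the first step.
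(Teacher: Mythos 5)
Your proposal is correct and follows essentially the same route as the paper: establish that the symplectic eigenvalues of every $m$-mode reduction concentrate around $\bar{\nu}$ for $n\gg 1$, then observe that $\prod_{i=1}^m \frac{2}{1+\nu_i}\approx\big(\frac{2}{1+\bar{\nu}}\big)^m$ is strictly decreasing in $m$ since $\frac{2}{1+\bar{\nu}}<1$, so the maximal Schmidt coefficient comes from the single-mode sector. The only difference is that the step you flag as the main obstacle --- deriving the concentration $\nu_i^{(m)}\approx\bar{\nu}$ via Levy-type and Weingarten arguments --- is not carried out in the paper at all, but is simply imported as a ready-made result from \cite{fukuda} (namely $\nu_i^2=\bar{\nu}^2-\mathcal{O}(n^{\alpha-1})$ for subsystems of size at most $Kn^{\alpha}$, $0\leq\alpha<1$).
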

\begin{proof}
 The starting point of the proof is a result from \cite{fukuda} where they show for any $k$-mode reduced subsystem, where the maximal subsystem size $k_{\max} \leq K n^\alpha$ with $0 \leq \alpha < 1$, we have 
\begin{eqnarray}
    \nu_i^2 = \bar{\nu}^2 - \mathcal{O}\Big(\frac{1}{n^{1-\alpha}}\Big).
\end{eqnarray}
Now for any large enough $n \in \mathbb{N}, ~\exists ~\kappa < 1, \text{ s.t. } k_{\max} = [\frac{n}{2}].$ Here $[x]$ represents the integral part of $x$.
 Therefore, for for $n \! >\!\!> \!1$, the symplectic eigenvalues $\{\nu_i\}_{i = 1}^k$ of any $k$-mode reduced covariance matrix ($k \leq k_{\max} = [\frac{n}{2}]$) of a random $n$-mode pure Gaussian state are almost identically equal to the average energy per mode, $\bar{\nu}$. 
 \begin{eqnarray}
     \nu_i \approx \bar{\nu} ~\forall i \in [1,k].
     \label{eq:sympletic1}
 \end{eqnarray}
Note that $\{\nu_i\}_{i = 1}^k$ constitutes a product of $k$ thermal states, whose maximal eigenvalue is simply the product of the largest eigenvalues of each constituent thermal state with inverse temperature $\beta_i = \ln \frac{\nu_i + 1}{\nu_i - 1}$. Since the eigenvalues of the tensor product of states multiply, the largest Schmidt coefficient from any $k$-mode reduction is the product of the largest eigenvalue of each thermal state. Recalling that the largest eigenvalue of a thermal state is its first one, the largest eigenvalue of any $k$-mode reduction is
\begin{eqnarray}
   \lambda_{\max}^k = \prod_{i = 1}^k (1 - e^{-\beta_i}) = \prod_{i = 1}^k \frac{1}{1 + \nu_i} \approx \Big(\frac{2}{1 + \bar{\nu}}\Big)^k.
    \end{eqnarray}
Since $\bar{\nu} \geq 1$, we have
 \begin{eqnarray}
     \frac{2}{1 + \bar{\nu}} > \Big(\frac{2}{1 + \bar{\nu}}\Big)^k ~\forall ~~~k > 1.
 \end{eqnarray}
Therefore the largest eigenvalue $\lambda_{\max}$ comes from the single-mode sector, with
\begin{eqnarray}
    \lambda_{\max} := \max_k \{\lambda_{\max}^k\} = \lambda_{\max}^{k =1}  \approx \frac{2}{1+ \bar{\nu}}.
    \label{eq:lambdamax}
\end{eqnarray}
Hence the proof.
\end{proof}
\noindent With this, we are ready to state the following theorem.
\begin{theorem}
    The typical GGM of a random $n$-mode pure Gaussian states for $n\!>\!\!>\!1$ is given by
    \begin{eqnarray}
        \overline{\mathcal{G}} \approx \frac{\bar{\nu}-1}{\bar{\nu}+1}.
    \end{eqnarray}
    Here,  we have
    $\mathbb{E}[\mathcal{G}_{n >\!\!>1}] := \overline{\mathcal{G}}$.
    \label{theorem:2}
\end{theorem}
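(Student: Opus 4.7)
The plan is to deduce Theorem~\ref{theorem:2} essentially as an immediate corollary of Lemma~\ref{lemma:1}. From Eq.~\eqref{eq:GGM-new-form}, the GGM of any $n$-mode pure Gaussian state is $\mathcal{G} = 1 - \lambda_{\max}$, where $\lambda_{\max}$ is the maximum of $\prod_{i=1}^{m} 2/(1+\nu_i)$ over all $m$-mode reductions with $1 \le m \le [n/2]$. Lemma~\ref{lemma:1} tells us that in the $n \gg 1$ regime this maximum is attained on the single-mode sector and equals approximately $2/(1+\bar{\nu})$. Substituting this into $\mathcal{G} = 1 - \lambda_{\max}$ yields $\mathcal{G} \approx 1 - 2/(1+\bar{\nu}) = (\bar{\nu}-1)/(\bar{\nu}+1)$, which is exactly the claimed expression.

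What remains is to argue that this typical value is also the Haar average $\overline{\mathcal{G}} = \mathbb{E}[\mathcal{G}]$ rather than just the value at some generic state. The input behind Lemma~\ref{lemma:1}, borrowed from \cite{fukuda}, is that the symplectic eigenvalues of every admissible $k$-mode reduction concentrate around $\bar{\nu}$ with a deviation of order $\mathcal{O}(n^{-(1-\alpha)})$ uniformly in the Haar measure. This concentration makes the random variable $\mathcal{G}$ essentially constant for large $n$, so that the Haar integral prescribed by Theorem~\ref{theorem:1} collapses to this almost-deterministic value, establishing the theorem.

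The main obstacle, modest but worth flagging, is bookkeeping the approximation errors cleanly. One has to check that the $\mathcal{O}(n^{-(1-\alpha)})$ bound on $\nu_i - \bar{\nu}$ survives (i) the maximization over the polynomially many bipartitions with $m \le [n/2]$, and (ii) the subsequent Haar expectation. Both are routine consequences of the Lipschitz continuity of $\nu \mapsto 2/(1+\nu)$ on $[1,\infty)$ together with the triangle inequality, so I do not anticipate any real obstruction beyond careful bookkeeping of the $1/n^{1-\alpha}$ corrections.
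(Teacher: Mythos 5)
Your proposal matches the paper's proof: both simply combine Lemma~\ref{lemma:1} (the maximal Schmidt coefficient is $\approx 2/(1+\bar{\nu})$ and arises from the single-mode sector) with $\mathcal{G} = 1 - \lambda_{\max}$ to obtain $(\bar{\nu}-1)/(\bar{\nu}+1)$. Your additional remarks on identifying the typical value with the Haar average via concentration are more careful than the paper, which treats this identification as implicit, but the route is the same.
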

\begin{proof}
    From Lemma. \ref{lemma:1}, we know the largest eigenvalue comes from the single-mode sector and is given by $\lambda_{\max} \approx \frac{2}{1 + \bar{\nu}}$, see Eq. \eqref{eq:lambdamax}. The typical value of the GGM then is given by
    \begin{eqnarray}
       \overline{\mathcal{G}} = 1 - \lambda_{\max} \approx \frac{\bar{\nu}-1}{\bar{\nu}+1}. 
    \end{eqnarray}
     Hence the proof.
   \end{proof}

\noindent Finally, we attempt to bound the deviation of GGM of a typical state from its Haar averaged value. 

\begin{theorem}
    For $n>\!>1$, the GGM $\mathcal{G}$ of any random Gaussian state satisfies
    \begin{eqnarray}
{\tt Prob}\{ (\mathcal{G} - \overline{\mathcal{G}})^2 > \epsilon \} \leq \exp (-c \epsilon^2 n),
    \end{eqnarray}
    where $\epsilon > \frac{C}{n}$. Here  $C,c>0$ are universal constants. 
\end{theorem}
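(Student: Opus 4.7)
The plan is to treat the statement as a concentration-of-measure phenomenon on the orthogonal symplectic group ${\tt K}(n)$, routed through its Haar isomorphism with ${\tt U}(n)$ as in Eq.~\eqref{eq:u2o}. By Theorem~\ref{theorem:1} the outer average over $\{\bm{\Gamma}\}$ is redundant, so I may fix $\Gamma = (\bar{\nu}/n)\,\mathbb{I} \oplus (\bar{\nu}/n)\,\mathbb{I}$ and view $\mathcal{G}$ as a single deterministic function $\mathcal{G}(O)$ on ${\tt K}(n)$ whose fluctuations are induced purely by the Haar-random $O$.

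The core technical step is a Lipschitz bound for $\mathcal{G}(O)$. Lemma~\ref{lemma:1} identifies $\mathcal{G}(O) = 1 - 2/(1+\nu_1(O))$ with $\nu_1(O)$ the symplectic eigenvalue of the single-mode marginal of $\sigma(O) = O\Gamma O^T$, up to corrections controlled by the concentration of symplectic spectra proved in \cite{fukuda}. I would then chain four Lipschitz estimates: (i) $O \mapsto \sigma$ is Lipschitz in the Frobenius norm with constant $\lesssim \|\Gamma\|_{\text{op}} = \bar{\nu}$; (ii) extracting a single-mode $2\times 2$ block is non-expansive; (iii) $\nu_1 = \sqrt{\det \sigma_1}$ is smooth on the compact range where $\sigma_1$ lives with high probability; (iv) $\nu \mapsto 2/(1+\nu)$ is Lipschitz on that range. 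Composing yields a uniform Lipschitz constant $L$ for $\mathcal{G}$ as a function on ${\tt K}(n) \cong {\tt U}(n)$.

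Combining with Levy's lemma in the form $\mathrm{Prob}\{|f - \mathbb{E}[f]| > \delta\} \le A\exp(-c_0 n \delta^2 / L^2)$ on ${\tt U}(n)$, and translating the event $(\mathcal{G}-\overline{\mathcal{G}})^2 > \epsilon$ into a deviation of $\mathcal{G}$ from its Haar mean, delivers the exponential suppression claimed. The restriction $\epsilon > C/n$ is precisely what is needed to absorb two sources of bias: first, the $\mathcal{O}(1/n)$ gap between $\overline{\mathcal{G}}$ (the large-$n$ value from Lemma~\ref{lemma:1}) and the exact Haar mean $\mathbb{E}[\mathcal{G}]$; second, the $\mathcal{O}(1/\sqrt{n})$ fluctuation of $\nu_1$ about $\bar{\nu}$ inherited from the concentration estimate of \cite{fukuda}. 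Once these bias terms lie below the deviation threshold, the standard concentration bound centers cleanly around $\overline{\mathcal{G}}$.

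The step I expect to be the main obstacle is producing a genuinely uniform Lipschitz constant. The map $\nu \mapsto 2/(1+\nu)$ has derivative that blows up as $\nu \downarrow -1$, so although physical symplectic eigenvalues satisfy $\nu \ge 1$, one must rule out that $\nu_1(O)$ drifts close to $1$ with non-negligible probability. This will require identifying a high-probability event on which $\nu_1$ is confined to a compact interval around $\bar{\nu}$, applying Levy only on that event, and separately bounding the complement by an exponential tail that can be folded into the final constants. Aligning the exponent with the $\exp(-c\epsilon^2 n)$ form stated, rather than a weaker $\exp(-c\epsilon n)$, is the most delicate bookkeeping point and will dictate how the event and Lipschitz estimates are partitioned.
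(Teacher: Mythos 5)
Your strategy is viable but follows a genuinely different route from the paper. The paper invokes no Levy's lemma and no Lipschitz analysis: it imports, ready-made from \cite{fukuda}, the concentration bound ${\tt Prob}\{(\nu^2-\bar{\nu}^2)^2 > \epsilon\} \le \exp(-c\epsilon^2 n)$ for the symplectic eigenvalue $\nu$ of a single-mode reduction, and then reduces the theorem to it by a two-line algebraic identity: $\overline{\mathcal{G}}-\mathcal{G} = \tfrac{2(\nu-\bar{\nu})}{(1+\nu)(1+\bar{\nu})}$, hence $\nu^2-\bar{\nu}^2 = \tfrac12(\overline{\mathcal{G}}-\mathcal{G})(1+\nu)(1+\bar{\nu})(\nu+\bar{\nu})$, and since $\nu,\bar{\nu}\ge 1$ the prefactor is at least $1$, giving $(\nu^2-\bar{\nu}^2)^2 \ge (\mathcal{G}-\overline{\mathcal{G}})^2$. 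The event $\{(\mathcal{G}-\overline{\mathcal{G}})^2>\epsilon\}$ is therefore contained in the event already controlled by \cite{fukuda}, and the tail bound transfers verbatim. Your plan instead re-derives the concentration from scratch on ${\tt K}(n)\cong{\tt U}(n)$ via a chained Lipschitz estimate; this buys self-containedness and would in principle yield an even stronger tail, at the cost of substantially heavier machinery for the same conclusion (and of reproving what \cite{fukuda} already supplies).

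Two of the obstacles you flag as the hard part are not actually obstacles. First, the worry about the derivative of $\nu\mapsto 2/(1+\nu)$ blowing up is moot: symplectic eigenvalues of a bona fide state satisfy $\nu\ge 1$ deterministically, so $\bigl|\tfrac{d}{d\nu}\tfrac{2}{1+\nu}\bigr| = \tfrac{2}{(1+\nu)^2}\le \tfrac12$ on the entire physical domain, and no high-probability truncation event is needed --- this uniform bound is precisely what the paper exploits in its inequality \eqref{eq:nuG}. Second, you have the comparison of exponents backwards: for $\epsilon\in(0,1]$ one has $\exp(-c\epsilon n)\le\exp(-c\epsilon^2 n)$, so the $\exp(-cn\epsilon/L^2)$ tail that Levy's lemma gives for the event $|\mathcal{G}-\overline{\mathcal{G}}|>\sqrt{\epsilon}$ already implies the stated $\exp(-c\epsilon^2 n)$ after adjusting constants; there is no delicate bookkeeping needed to ``align'' the exponent. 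The one point you raise that is a genuine (if minor) issue --- the $\mathcal{O}(1/n)$ gap between the asymptotic value $\overline{\mathcal{G}}$ and the exact Haar mean, absorbed by the condition $\epsilon>C/n$ --- is handled implicitly in the paper as well, via the same restriction inherited from \cite{fukuda}.
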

\begin{proof}
    For $n>\!>1$, it follows from Lemma. \ref{lemma:1} that the largest eigenvalue comes from the single mode sector. Let the largest eigenvalue be induced by the symplectic eigenvalue $\nu$, which we know from Eq. \eqref{eq:sympletic1} that $\nu \approx \bar{\nu}$. The corresponding GGM is
    \begin{eqnarray}
        \mathcal{G} = 1 - \frac{2}{1 + \nu}.
    \end{eqnarray}
    We now intend to examine its deviation from the mean value $\overline{\mathcal{G}}$. Therefore, the  quantity of interest is $(\mathcal{G} - \overline{\mathcal{G}})^2$, where the typical value of GGM,  $\overline{\mathcal{G}}$, is obtained in Theorem. \eqref{theorem:1}.   
    
    On the other, from \cite{fukuda}, for universal constants $C,c > 0$, and $\epsilon > \frac{C}{n}$ 
    \begin{eqnarray}
       {\tt Prob}\{ (\nu^2 - \bar{\nu}^2)^2 > \epsilon \} \leq \exp (-c \epsilon^2 n). 
       \label{eq:fukuda}
    \end{eqnarray}
   The proof proceeds by noting that 
    \begin{eqnarray}
     \overline{\mathcal{G}} - \mathcal{G} = \frac{2(\nu - \bar{\nu})}{(1+\nu)(1+\bar{\nu})}.
    \end{eqnarray}
    This in turn translates to
    \begin{eqnarray}
        \nu^2 - \bar{\nu}^2 = \frac{1}{2}(\overline{\mathcal{G}} - \mathcal{G})(1+\nu)(1+\bar{\nu})(\nu + \bar{\nu}).
    \end{eqnarray}
    Since $\nu, \bar{\nu} \geq 1$, we have
    \begin{eqnarray}
    (\nu^2 - \bar{\nu}^2)^2 \geq   (\mathcal{G} - \overline{\mathcal{G}})^2.  
    \label{eq:nuG}
    \end{eqnarray}
    Finally, by combining Eqs. \eqref{eq:fukuda} and \eqref{eq:nuG} we arrive at
     \begin{eqnarray}
        {\tt Prob}\{ (\mathcal{G} - \overline{\mathcal{G}})^2 > \epsilon \} \leq \exp (-c \epsilon^2 n).
    \end{eqnarray}
    Hence the proof.
\end{proof}

\section{Typical Genuine entanglement of pure Gaussian states with a few modes}
\label{sec:nl}
In the previous section, we demonstrated that the genuine entanglement content of a typical state is sharply concentrated around its mean value for Gaussian states with a large number of modes. This section intends to visualize this phenomenon with typical Gaussian states with finite modes.

To this goal, we randomly generate $N$
pure multimode Gaussian states of three, four, five and six modes and compute their GGM. A brief description of the procedure for generating random pure Gaussian states follows.
From Eq. \eqref{eq:random1}, the covariance matrix of any pure Gaussian state is of the form $\sigma = O \Gamma O^T$.
To construct a random $\sigma$, we first need to
choose a $\Gamma$. Following Theorem. \ref{theorem:1}, any $\Gamma$ consistent with Eq. \eqref{eq:enboundpm2} is equivalent. Therefore, without loss of any generality we choose $\Gamma = D \oplus D^{-1}$, where $D = \frac{\bar{\nu}}{n}\mathbb{I}$.
Now, we sample a random unitary matrix $U$ following the Haar measure in $U(n)$. See \cite{Karol,Karol2} for a detailed description of Haar uniform generation of $U \in U(n)$. 
We now briefly describe their generation procedure. First, we generate $n$ random pure states $\{\ket{\psi_1}, \ket{\psi_2}, \ldots, \ket{\psi_n} \}$ each of ${\tt dim} = n$. 
Here $\ket{\psi_k} = {\tt normalize} [(a_k^1 + i b_k^1, a_k^2 + i b_k^2, \ldots, a_k^n+ i b_k^n)^T]$, where all $(a_k^j,b_k^j)$s are independently chosen from a standard normal distribution.
By a standard algorithm, Grahm-Schmidt orthogonalization \cite{Arfken}, we arrive at an orthonormal set $\{\ket{\phi_1}, \ket{\phi_2}, \ldots, \ket{\phi_n} \}= {\tt orthoganalize}[\{\ket{\psi_1}, \ket{\psi_2}, \ldots, \ket{\psi_n} \}]$.
The unitary can be constructed as
\begin{eqnarray}
    U = \begin{bmatrix}
       . & & \ldots & \\
        & & \ldots & \\
        \ket{\phi_1} & \ket{\phi_2} & \ldots & \ket{\phi_n} \\
        & & \ldots & \\
        & & \ldots & 
    \end{bmatrix}.
\end{eqnarray}

Finally, we obtain the required orthogonal symplectic matrix $O$ from $U$ from Eq. \eqref{eq:u2o}.

\begin{figure}[ht]
    \centering
    \includegraphics[width=\linewidth]{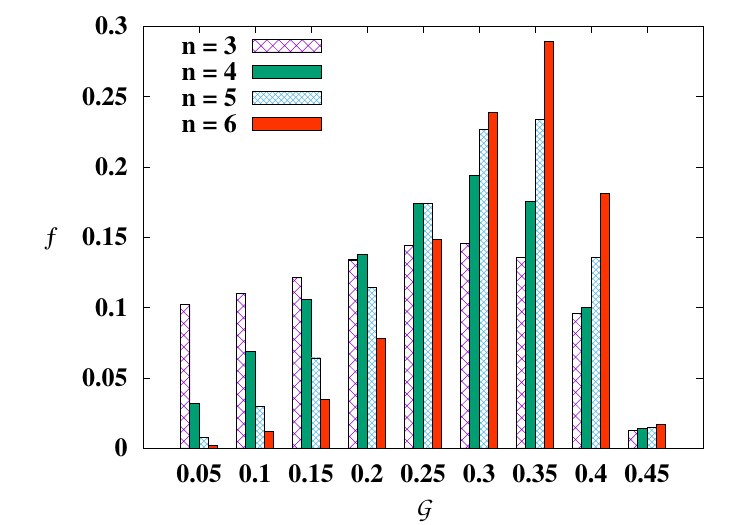}
    \caption{The bars at a given value $x$ represents the fraction of randomly generated $n$-mode pure Gaussian states $f$ with GGM values between $x - 0.05$ to $x$. The distribution shifts progressively towards higher GGM values with an increase in the number of modes. The average energy per mode is chosen to be $\bar{\nu} = 2.6$. Both axes are dimensionless.}
    \label{fig:fig1}
\end{figure}

The numerical experiment is performed by generating $N \sim 10^6$ $n$-mode pure Gaussian state with an average energy per mode fixed to $\bar{\nu} = 2.6$. Using Eq. \eqref{eq:GGM-new-form}, we compute and record the GGM of each such generated state. 
We use this data to construct the distribution of GGM values across the range consistent with the chosen average energy per mode, see Fig. \ref{fig:fig1}. We find that with an increase in the number of modes, the distribution progressively sharpens from a relatively flat one that we get for $n = 3$. On top of the sharpening (concentration), we notice a progressive shift of the overall distribution towards higher GGM values with an increasing number of modes.

We then move on to calculate statistical averages like, mean, the Haar averaged GGM $\mathbb{E}[\mathcal{G}_n]$ and the corresponding standard deviation $\mathbb{E}[\Delta \mathcal{G}_n]$ and track them with the increase in the number of modes:
\begin{eqnarray}
  \mathbb{E}[\mathcal{G}_n] &=& \frac{1}{N} \sum_{i = 1}^N \mathcal{G}^i_n, \nonumber \\
  \mathbb{E}[\Delta \mathcal{G}_n] &=& \sqrt{\mathbb{E}[\mathcal{G}_n^2] - \mathbb{E}^2[\mathcal{G}_n]},
\end{eqnarray}
where $\mathcal{G}^i_n$ is the GGM of the $i^{\text{th}}$ randomly generated $n$-mode pure Gaussian state, and $\mathbb{E}[\mathcal{G}_n^2] = \frac{1}{N} \sum_{i = 1}^N (\mathcal{G}^i_n)^2$.
 We find that the average value of GGM increases as the number of modes increases. At the same time, the standard deviation of the distribution of Haar uniformly generated $n$-mode pure Gaussian states decrease monotonically, see Table. \ref{tab:my_label} for exact values.
\begin{table}[h]
    \centering
    \begin{tabular}{|c|c|c|c|c|}\hline
        $n$ & $3$ & $4$ & $5$ & $6$  \\ \hline
     $\mathbb{E}[\mathcal{G}_n]$     & $~0.2068~$ & $~0.2357~$ & $~0.2647~$ & $~0.2874~$\\ \hline
    $\mathbb{E}[\Delta \mathcal{G}_n]$     & $0.1101$ & $0.0941$ & $0.0814$  & $0.0707$ \\ \hline
    \end{tabular}
    \caption{The Haar averaged value of GGM   $\mathbb{E}[\mathcal{G}_n]$ and the standard deviation   $\mathbb{E}[\Delta\mathcal{G}_n]$ for $n = 3$ to $6$. The average energy per mode is restricted to $\bar{\nu} = 2.6$.}
    \label{tab:my_label}
\end{table}
As mentioned before, in our case, we choose $N = 10^6$. The convergence of the reported numbers for $N = 10^6$ is guaranteed by the invariance of these values by changing the number of generated states. The result remains qualitatively similar for other choices of average energy per mode as well. Finally, for $\bar{\nu} = 2.6$ and $n >\!> 1$, Theorem. \ref{theorem:2} predicts $\mathbb{E}[\mathcal{G}_{n >\!> 1}]=\overline{\mathcal{G}} = \frac{4}{9} \approx 0.4444$. We verify this numerically for $n = 50.$ Assuming Lemma. \ref{lemma:1} holds for $n = 50$, we get
$ \mathbb{E}[\mathcal{G}_{n = 50}] \approx 0.4129$ which is pretty close to $\overline{\mathcal{G}}$. The corresponding distribution is also quite sharp as indicated by a minuscule standard deviation: $
  \mathbb{E}[\Delta \mathcal{G}_{n=50}] \approx 0.0089.$

\section{conclusion}
\label{sec:con}
Investigation of typical properties of a certain set of states has attained a lot of focus, particularly in finite dimensional systems \cite{typical1,typical2,typical3,typical4,t5,t6,t7}. In the continuous variable sector as well, typical features of Haar uniform Gaussian states ranging from bipartite entanglement and work extraction have been examined in \cite{fukuda,uttam}. 

In this work, we first outline the generation of Haar uniform multimode pure Gaussian states and derive an expression for the Haar averaged value of any function on the set of generated states. 
Then we move on to the central tenant of our work: investigation of the genuine entanglement characteristics of Haar uniformly generated multimode pure Gaussian states with a fixed average energy per mode. We choose the generalized geometric measure (GGM) \cite{ggm1, ggmcv} to quantify the genuine entanglement content of the Haar uniformly generated states.
We accomplish this aim by initially focusing on states with a large number of modes. By leveraging the concentration of the symplectic spectrum \cite{fukuda}, we infer that the generalized geometric measure (GGM) content can be described by the symplectic eigenvalues of single-mode reductions. This insight facilitates a closed-form expression for the Haar averaged GGM based on the average energy per mode. Additionally, we show that, for typical Haar states, the GGM closely resembles its Haar averaged value. We quantify this closeness by demonstrating the probability of the GGM of a typical state to be different from the Haar averaged value is upper bounded with an exponentially suppressed bound. Note that the above findings hold for states with a large number of modes.

Finally, we investigate the patterns of GGM distribution of Haar uniformly generated multimode pure Gaussian states with three, four, five and six modes.
 Our findings show that the distribution becomes more concentrated and shifts towards larger GGM values as the number of modes grows. By calculating the Haar averaged GGM and the GGM distribution standard deviation, we are able to quantify these patterns and demonstrate that, as the number of modes rises, the former increases while the latter decreases.
Overall, we believe that our work sheds some light on the genuine entanglement properties of typical multimode Gaussian states. We hope that the findings presented in this work will serve as a foundation for future research in this direction.

\section*{Acknowledgement}
This work is supported by the Hong Kong Research Grant
Council through Grants No. 17307719 and 17307520 and
though the Senior Research Fellowship Scheme SRFS2021-7S02, by the Croucher Foundation, and by the John Templeton
Foundation through Grant No. 62312, The Quantum Information Structure of Space-time (qiss.fr). The opinions expressed in this publication are those of the authors and
do not necessarily reflect the views of the John Templeton Foundation.

\bibliography{bib}

\end{document}